\documentclass[reprint,amsmath,amssymb,aps,prl]{revtex4-2}

\usepackage{graphicx}      
\usepackage{hyperref} 
\usepackage{xcolor}        
\usepackage{mathrsfs}      
\usepackage{physics}
\newcommand{\h}{\mathfrak{h}}
\newcommand{\g}{\mathfrak{g}}
\newcommand{\E}{\mathbb{E}}

\def\Xint#1{\mathchoice
   {\XXint\displaystyle\textstyle{#1}}%
   {\XXint\textstyle\scriptstyle{#1}}%
   {\XXint\scriptstyle\scriptscriptstyle{#1}}%
   {\XXint\scriptscriptstyle\scriptscriptstyle{#1}}%
   \!\int}
\def\XXint#1#2#3{{\setbox0=\hbox{$#1{#2#3}{\int}$}
     \vcenter{\hbox{$#2#3$}}\kern-.5\wd0}}

\def\dashint{\Xint-}

\usepackage{amsthm} 

\newtheorem{lemma}{Lemma}


\begin{document}

\title{Free Convolution and Generalized Dyson Brownian Motion}

\author{Pierre Bousseyroux}
\email{pierre.bousseyroux@polytechnique.edu}
\affiliation{Chair of Econophysics and Complex Systems, Ecole polytechnique, 91128 Palaiseau Cedex, France}
\affiliation{LadHyX UMR CNRS 7646, Ecole polytechnique, 91128 Palaiseau Cedex, France}

\author{Jean-Philippe Bouchaud}
\email{jean-philippe.bouchaud@cfm.com}
\affiliation{Capital Fund Management, Paris 75007, France}
\affiliation{Chair of Econophysics and Complex Systems, Ecole polytechnique, 91128 Palaiseau Cedex, France}
\affiliation{Academie des Sciences, Paris 75006, France}
\begin{abstract}
The eigenvalue spectrum of the sum of large random matrices that are mutually ``free'', i.e., randomly rotated, can be obtained using the formalism of R-transforms, with many applications in different fields. We provide a direct interpretation of the otherwise abstract additivity property of R-transforms for the sum in terms of a dynamical evolution of ``particles'' (the eigenvalues), interacting through two-body and higher-body forces and subject to a Gaussian noise, generalizing the usual Dyson Brownian motion with Coulomb interaction. Interestingly, the appearance of an outlier outside of the bulk of the spectrum is signalled by a divergence of the ``velocity'' of the generalized  Dyson motion. We extend our result to products of free matrices. 
\end{abstract}

\maketitle

\vskip 1cm
One of the most exciting recent developments in Random Matrix Theory is the concept of ``freeness'', which generalizes independence for non-commuting random objects, like matrices \cite{voiculescu1992free, speicher2009free, tulino2004random}. In a nutshell, two large symmetric random matrices $\mathbf{A}$ and $\mathbf{B}$ are said to be free if their eigenbasis are ``as different'' as possible. More formally, if one chooses a basis such a $\mathbf{A}$ is diagonal, then $\mathbf{B}$ is free if it can be written as $\vb{O} D \vb{O}^\top$, where $D$ is a diagonal matrix and $\vb{O}$ a random orthogonal matrix, chosen uniformly over the rotation group. In that sense, free matrices are maximally non-commuting, whereas $\mathbf{A}$ and $\mathbf{B}$ would instead commute if their eigenbasis coincide (up to a permutation).

In general, the eigenvalues of the sum of two non-commuting matrices -- say $\mathbf{A}$ and $\mathbf{B}$ -- cannot be simply characterized in terms of the eigenvalues of each of these matrices. But when one considers free symmetric matrices of large sizes, the distribution of the eigenvalues of the sum can be exactly computed when the distribution of eigenvalues of $\mathbf{A}$ and $\mathbf{B}$ are known. More precisely, one relies on a generalisation of the log-characteristic function for classical random variables called the R-function in the context of free matrices, which is additive when $\mathbf{A}$ and $\mathbf{B}$ are free, i.e. $R_{\mathbf{A}+\mathbf{B}}(z) = R_{\mathbf{A}}(z)+R_{\mathbf{B}}(z)$, where the coefficient of the expansion of $R_{\mathbf{A}}(z)$ (resp. $R_{\mathbf{B}}(z)$) in powers of $z$ define the {\it free cumulants} $\kappa_n$ of the matrix $\mathbf{A}$ (resp.  $\mathbf{B}$) -- see e.g. \cite{tulino2004random, speicher2009free, potters2020first} and below. In other words, free cumulants are additive, exactly as usual cumulants are additive when one deals with sums of independent random variables.  Similar results hold for products of free matrices, i.e. the eigenvalue spectrum of $\sqrt{\mathbf{B}}\, \mathbf{A}\sqrt{\mathbf{B}}$ can be obtained from those of $\mathbf{A}$ and $\mathbf{B}$. These properties have proven to be extremely useful in a host of different fields, from quantum chaos, telecommunication, machine learning, ecology and finance, see e.g. \cite{Foini2019ETH,Foini2019rotinv,Pappalardi2022,tulino2004random, stone2018feasibility, baron2022eigenvalue,yan2017edge, bouchbinder2021low, potters2020first, atanasov2024scaling} for various examples and reviews. 

The aim of this work is to interpret the additivity property of R-transforms in physical terms through effective dynamical equations at the level of individual eigenvalues. Such an interpretation is well-known in the case where  $\mathbf{B}$ is a Wigner matrix, i.e. a symmetric $N \times N$ matrix where all entries $B_{ij}$ are IID random variables with zero mean and a finite variance $\sigma^2/N$, such that only the second free cumulant $\kappa_2^\mathbf{B}$ is non zero. In this case, the eigenvalues of $\mathbf{A}+\mathbf{B}$ can be obtained as the solutions of the following ``Dyson Brownian motion'' equations computed at fictitious time $t=1$:
\begin{equation}\label{eq:DBM}
    \mathrm{d}\lambda_i = \sqrt{\frac{2\sigma^2}{N}}\mathrm{d}W_i + \frac{\sigma^2}{N} \sum_{1\leq j\neq i\leq N} \frac{ \mathrm{d}t}{\lambda_i - \lambda_j},\quad (i=1, \ldots, N),
\end{equation}
where $\lambda_i(t=0)=\lambda_i^\mathbf{A}$ are the eigenvalues of $\mathbf{A}$ and $\mathrm{d}W_i$ are independent normalized Wiener noises. 

In the following, we generalize these equations of motion for matrices $\mathbf{B}$ with arbitrary free cumulants $\kappa_n^\mathbf{B}$. We first consider the deterministic case and consider a continuous sequence of random matrices $t\mapsto \vb{M}(t)= \vb{A} + \vb{B}(t)$ with $\vb B(1)=\vb B$. The corresponding random eigenvalues are $\lambda_1(t) < \ldots < \lambda_N(t)$. We then prove that this leads to a deterministic, effective evolution equations of these eigenvalues which reads, in the $N \gg 1$ limit
\begin{equation}\label{eqmain}
    \mathrm{d}\lambda_i =  F_i(\lambda_i(t),t) \mathrm{d}t + o(1), \qquad (i=1, \ldots, N),
\end{equation}
with
\vspace{-0.5em}
\begin{equation}\label{eq:GDBM}
    F_i(\lambda_i,t) = \sum_{n=1}^{+\infty}  \, \frac{\mathrm{d} \kappa_{n}^{\vb{B}(t)}}{\mathrm{d}t}\frac{1}{N^{n-1}} \sum_{\substack{e_1,\dots,e_{n-1} \neq i \\ \text{p.d.}}} \frac{1}{\prod_{j=1}^{n-1} (\lambda_i - \lambda_{e_j})},
\end{equation}
where the sum runs over all indices $1 \leq e_1, \dots, e_{n-1} \leq N$ that are pairwise distinct (``p.d.'') and different from $i$. 

Note that equation of motion Eq. \eqref{eqmain} involves n-body ``forces'' with $n \geq 2$, whereas the standard case Eq.~\eqref{eq:DBM} only contains two body forces, identical to the term $n=2$ in Eq.~\eqref{eq:GDBM} with $\kappa_2^{\vb{B}}=\sigma^2$. Furthermore, these n-body forces are non-potential, i.e. cannot be written as $-\partial_i V(\{ \lambda_j \})$ where $V$ is a function of the position of all eigenvalues, except the two-body force which corresponds to the well-known Coulomb gas $V(\{ \lambda_j \}) = - \sum_{i < j} \log |\lambda_i - \lambda_j |$ \cite{forrester2010log}. Non-potential effects, however, are of order $O(N^{-1})$. In fact, when $N \to \infty$, $F_i(\lambda_i,t)$ converges to $\sum_{n=1}^{+\infty}  \, \dot \kappa_{n}^{\vb{B}(t)} \h_{n-1}(\lambda_i,t)$, where we have introduced generalized Hilbert transforms of the eigenvalues, defined as 
\begin{equation}\label{defh}
\h_{n}(\lambda,t) = \dashint \dots \dashint \prod_{k=1}^n {\rm d}\lambda_k \rho(\lambda_k,t) \frac{1}{\prod_{k=1}^{n} (\lambda - \lambda_k)}, 
\end{equation}
where $\rho(\lambda,t)$ is the time dependent density of eigenvalues, and $\dashint$ means that we are taking the principal value of the integrals. We note that $\h_0(\lambda)=1$ and $\h(\lambda):=\h_1(\lambda)$ is the standard Hilbert transform. Using Matytsin's ``trick'' \cite{matytsin1994large}, one can in principle compute all $\h_{n >1}(\lambda)$ in terms of $\h(\lambda)$ and the eigenvalue density $\rho$. For example, $\h_2(\lambda)=\h^2(\lambda) - 2 \zeta(2) \rho^2(\lambda)$.

Let us first show how the effective dynamical equation of individual eigenvalues, Eq. \eqref{eq:GDBM}, recovers the standard additivity of R-transforms for free matrices. We will then show that  Eq.~\eqref{eq:GDBM} can also be obtained from standard perturbation theory from Quantum Mechanics \cite{landau2013quantum}, that can be resummed to all orders when the perturbation is free. Such an approach will also allow us to obtain the random contribution generalizing the Brownian term in Eq.~\eqref{eq:DBM} in the case where $\vb{B}$ is infinitely divisible in a free sense \cite{e2005classical}. 

The basic object that is needed is the Stieltjes transform of the density of eigenvalues (also called the normalized trace of the resolvent), defined as
\begin{equation}
    \g(z) := \frac{1}{N} \sum_{j = 1}^N \frac{1}{z - \lambda_j}
\end{equation}
for an arbitrary matrix with eigenvalues $\lambda_j$, and with $z \in \mathbb{C}$. The R-transform is then obtained as \cite{tulino2004random, speicher2009free, potters2020first}
\begin{equation}\label{defR}
   R(\g(z)):= z - \frac{1}{\g(z)},
\end{equation}
such that its expansion in powers of $g$ defines the free cumulants $\kappa_n$ such that $R(g) = \sum_{n=0}^{+\infty} \kappa_{n+1} g^{n}$.

Now, in order to prove our main result Eq.~\eqref{eqmain}, we will first obtain a generalization of the Burgers equation that describes the evolution of the Stieltjes transform of $\vb{M}(t) = \vb{A} + \vb{B}(t)$ where $\vb{B}$ is a Wigner matrix. We denote as  $\g_{\vb{M}}(z,t)$ the Stieltjes transform of $\vb{M}(t)$ evaluated at $z$, and similarly for R-transforms. Then, from that additivity of R-transforms (i.e. $R_{\vb{M}(t)}=R_{\vb{A}}+R_{\vb{B}(t)}$) when $\vb{A}$ and $\vb{B}(t)$ are free and using the methods of characteristics, one obtains (see proof in \hyperref[proofburgers]{SI-1}): 
\begin{equation}\label{burgers}
    \partial_t \g_{\vb{M}}(z,t) = - \partial_t R_{\vb{B}}(\g_{\vb{M}}(z,t),t) \partial_z \g_{\vb{M}}(z,t).
\end{equation}
Note that when $\vb{B}(t)$ is a Wigner matrix of variance $\sigma^2 t $, one has $R_{\vb{B}}(g,t) = \sigma^2 t g$ and one recovers the well-known Burgers equation for $\g_{\vb{M}}(z,t)$, namely (see e.g. \cite{rogers1993interacting, allez2012invariant, potters2020first}) 
\begin{equation}\label{burgers_W}
    \partial_t \g_{\vb{M}}(z,t) = - \sigma^2 \g_{\vb{M}}(z,t) \partial_z \g_{\vb{M}}(z,t).
\end{equation}
Now since 
\begin{equation}
    \partial_z \g_{\vb{M}} = -\frac{1}{N} \sum_{e_0 = 1}^{N} \frac{1}{(z - \lambda_{e_0})^2}, 
\end{equation}
where $\lambda_{e_0}$ are the eigenvalues of $\vb{M}(t)$, one can write the right hand side of Eq.~\eqref{burgers} as
\begin{align} \nonumber
    -\partial_t R_{\vb{B}}(\g_{\vb{M}},t) \partial_z \g_{\vb{M}} 
    &= \sum_{n=0}^{+\infty} \frac{\mathrm{d} \kappa_{n+1}^{\vb{B}(t)}}{\mathrm{d}t} n(\g_{\vb{M}})^{n-1}
    \frac{1}{N} \sum_{e_0 = 1}^{N} \frac{1}{(z - \lambda_{e_0})^2}
\end{align}
\vspace{-5mm} 
\begin{align}
    &= \sum_{n=1}^{+\infty} \frac{\mathrm{d} \kappa_{n+1}^{\vb{B}(t)}}{\mathrm{d}t} \frac{n}{N^{n}} 
    \sum_{\substack{e_0, \ldots, e_{n-1} \\ \text{p.d.}}} 
    \frac{1}{(z - \lambda_{e_0})^2} 
    \frac{1}{\prod_{k=1}^{n-1} (z - \lambda_{e_k})}
\end{align}
where we have assumed that when $N \to \infty$, all $e_i$'s are distinct. The key now is to use the Lemma \ref{lemma1}, shown in \hyperref[s2]{SI-2}, 
\begin{align}\label{eq1}
    \partial_t \g_{\vb{M}}(z,t) &= -\partial_t R_{\vb{B}}(\g_{\vb{M}},t) \partial_z \g_{\vb{M}} \nonumber \\
    &= \frac{1}{N} \sum_{e_0} \frac{1}{(z - \lambda_{e_0})^2} F(\lambda_{e_0},t).
\end{align}
where $F(\lambda_{e_0},t)$ is given by Eq.~\eqref{eq:GDBM}.
Hence, introducing the time dependent density of eigenvalues $\rho(\lambda,t)$, and dropping unnecessary indices, one can write:
\begin{equation}
\partial_t g =  \int \frac{\partial_t \rho(\lambda,t)}{z - \lambda} \, {\rm d}\lambda = -\int_{\lambda} \frac{\partial_{\lambda}(\rho(\lambda,t) F(\lambda,t))}{z - \lambda} \, 
{\rm d}\lambda,
\end{equation}
where we have used Eq.~\eqref{eq1} and integrated by parts the last term. Since this equality is true for all $z$, we deduce that the time evolution of the eigenvalue density is given, for $N \to \infty$, by the following Euler equation
\begin{equation}
    \partial_t \rho(\lambda,t) = - \partial_x \left[\rho(\lambda,t) F(\lambda,t)\right],
\end{equation}
which can be interpreted as the time evolution of the density of ``particles'' the position of which evolve according to $\dot \lambda=F(\lambda,t)$. This proves our central result, Eq.~\eqref{eqmain}. 
As an additional check, in the case where $\vb{B}(t)={t}^\alpha\,\vb{B}$ with $\alpha > 0$, one can take the $t \to \infty$ limit of Eq.~\eqref{eqmain} where the eigenvalues $\lambda_i$ should behave asymptotically as ${t}^\alpha\lambda_i^{\vb{B}}$. Substituting into Eq.~\eqref{eqmain}, we find that the following equation should be satisfied when $ s_i = \lambda_i^{\vb{B}} $, since $ \kappa_{n}^{\vb{B}(t)} = t^{\alpha n} \kappa_{n}^{\vb{B}} $:
\begin{equation}\label{moyen}
    s_i \underset{N \to \infty}{=} \sum_{n=1}^{\infty} 
    (n+1) \kappa_{n+1}^{\vb{B}} \,
    \h_{n}^{(N)}(s_i) + o(1),
\end{equation}
where $\h_{n}^{(N)}(s_i)$ are the finite $N$ generalized Hilbert transforms, i.e.
\begin{equation}
    \h_{n}^{(N)}(s_i) := \frac{1}{N^n} \sum_{\substack{e_1,\dots,e_n \neq i \\ \text{p.d.}}} \frac{1}{\prod_{j=1}^{n} (s_i - s_{e_j})}
\end{equation}
Multiplying both sides by $(z - s_i)^{-1}$, summing over $i$ and setting $s_i=\lambda_i^{\vb{B}}$ leads, after a few manipulations, to $ z \g_{\vb{B}}(z) - 1 = \g_{\vb{B}}(z)R_{\vb{B}}(\g_{\vb{B}}(z))$, which indeed coincides with the definition of the R-transform in terms of the Stieltjes transform $\g(z)$, Eq.~\eqref{defR}.

Let us now give an interpretation of the (fictitious) dynamical evolution of the eigenvalues, Eq.~\eqref{eq:GDBM} from the point of view of perturbation theory, treating $t \,\vb{B}$ as small when $t \to 0$. The eigenvalues of $\vb{M}(t) = \vb{A} + t\vb{B}$ can then be expanded as \cite{landau2013quantum}: 
\begin{equation}\label{eq:pert}
    \lambda_i \underset{t \to 0}{=} \lambda_i^{\vb{A}} + t \mu_i^{(1)} + t^2 \mu_i^{(2)} + \dots + t^{k} \mu_i^{(k)} + o(t^{k}),
\end{equation}
with 
\begin{equation}\label{per1}
    \mu_i^{(1)} = \bra{i} \vb{B} \ket{i}, \quad  \mu_i^{(2)} = \sum_{k \neq i} \frac{\bra{k} \vb{B} \ket{i}^2}{\lambda_i^{\vb{A}} - \lambda_k^{\vb{A}}},
\end{equation}
\begin{multline}\label{per2}
    \mu_i^{(3)} = \sum_{k_1, k_2\neq i}\frac{\bra{i} \vb{B} \ket{k_1} \bra{k_1} \vb{B} \ket{k_2} \bra{k_2} \vb{B} \ket{i}}{(\lambda_i^{\vb{A}} - \lambda_{k_1}^{\vb{A}})(\lambda_i^{\vb{A}} - \lambda_{k_2}^{\vb{A}})} \\- \bra{i} \vb{B} \ket{i} \sum_{1 \leq k \neq i \leq N} \frac{|\bra{i} \vb{B} \ket{k}|^2}{(\lambda_i^{\vb{A}} - \lambda_k^{\vb{A}})^2},
\end{multline}
and more and more complicated expressions as $k$ increases ($\mu_i^{(4)}$ is given in the SI-3, Eq.~\eqref{mu4}). The bra-kets $\bra{a}$, $\ket{a}$ denote the unit eigenvectors of $\vb{A}$, respectively associated with the eigenvalues $\lambda_a^{\vb{A}}$. More generally, as shown in the third point of the SI, $\mu_i^{(j+1)}$ includes only sums of fractional terms where the numerator is the product of $ j $ matrix elements of $ \vb{B} $, and the denominator is of the form $ (\lambda_i^{\vb{A}} - \lambda_{k_1}^{\vb{A}}) \dots (\lambda_i^{\vb{A}} - \lambda_{k_j}^{\vb{A}}) $, where $ \lambda_{k_1}^{\vb{A}}, \dots, \lambda_{k_j}^{\vb{A}} $ are eigenvalues of $ \vb{A} $ different from $ \lambda_i^{\vb{A}} $. The contribution from the case where the $ \lambda_{k_a}^{\vb{A}} $ are distinct is of the form
    \begin{equation}\label{firsteq}
        \sum_{\substack{k_1, \dots, k_j \\ \text{\text{p.d.}}}} \frac{\bra{i} \vb{B} \ket{k_1} \bra{k_1} \vb{B} \ket{k_2} \dots \bra{k_j} \vb{B} \ket{i}}{(\lambda_i^{\vb{A}} - \lambda_{k_1}^{\vb{A}}) \dots (\lambda_i^{\vb{A}} - \lambda_{k_j}^{\vb{A}})}.
    \end{equation}
Now, to leading order, all these quantities converge for large $N$ to their average over the random rotation defining the eigenbasis of $\vb{B}$, for which we find $ \E\left[\mu_i^{(1)}\right] = 0$ and
\begin{equation}\label{expected}
\begin{aligned}
\E\left[\mu_i^{(2)}\right] &\underset{N\to \infty}{=}\frac{\kappa_2^{\vb{B}}}{N} \sum_{1 \leq k \neq i \leq N} \frac{1}{\lambda_i^{\vb{A}} - \lambda_k^{\vb{A}}}, \\
\smash{\E\left[\mu_i^{(3)}\right]} &\underset{N\to \infty}{=}\frac{\kappa_3(\vb{B})}{N^2} 
\sum_{\substack{k_1, k_2\neq i \\ \text{p.d.}}}
\frac{1}{(\lambda_i^{\vb{A}} - \lambda_{k_1}^{\vb{A}})(\lambda_i^{\vb{A}} - \lambda_{k_2}^{\vb{A}})}.
\end{aligned}
\end{equation}
and more generally, the contribution to $\E\left[\mu_i^{(j+1)}\right]$ from the case where the $ \lambda_{k_a}^{\vb{A}} $ are distinct is
    \begin{equation}\label{secondeq}
        \frac{\kappa_{j+1}^{\vb{B}}}{N^j} \sum_{\substack{k_1, \dots, k_j \\ \text{\text{p.d.}}}} \frac{1}{(\lambda_i^{\vb{A}} - \lambda_{k_1}^{\vb{A}}) \dots (\lambda_i^{\vb{A}} - \lambda_{k_j}^{\vb{A}})},
    \end{equation}
which follows from a beautifully simple formula for the free cyclic cumulants obtained in \cite{maillard2019high} (see also \cite{bernard2024structured}):
\begin{equation}\label{cyclicformula}
    \E\left[\bra{b_1} \vb{B} \ket{b_2} \bra{b_2} \vb{B} \ket{b_3} \dots \bra{b_{j-1}} \vb{B} \ket{b_{j}}\right] \underset{N\to \infty}{\sim} \frac{\kappa_j^{\vb{B}}}{N^{j-1}}
\end{equation}
when the integers $1\leq b_1, ..., b_j\leq N$ are pairwise distinct and $\vb{B}$ is a rotationally invariant random matrix.

As we show in SI-4, the perturbation expansion in Eq.~\eqref{eq:pert} -- which is expected to converge to its mean value in the high-dimensional limit -- is the exact solution of our dynamical system in Eq.~\eqref{eqmain} for $t \to 0$, up to order $t^4$, i.e., including the contribution $\E\left[\mu_i^{(4)}\right]$. Furthermore, one can convince oneself that there is indeed a term corresponding to Eq.~\eqref{secondeq} when expanding the solution of Eq.~\eqref{eqmain} to order $t^{(j+1)}$. Since the additivity of R-transforms can be proven by re-summing the expansion of the Stieltjes transform of $\vb{A} + \vb{B}$ in powers of $\vb{B}$ (see e.g. \cite{potters2020first, atanasov2024scaling} and SI) and that perturbation theory is also result of such an expansion, we expect that the solution of Eq.~\eqref{eqmain} indeed  reproduces perturbation theory to all orders in $t$. 

We now turn to the non-deterministic case and aim to generalize the Dyson Brownian motion, Eq.~\eqref{eq:DBM}, by focussing on infinitely divisible matrices in a free sense \cite{e2005classical}. Infinite divisibility  means that one can write $\vb{B}(t) = \int _0^t{\rm d}\vb{B}(t)$, where ${\rm d}\vb{B}(t)$ is the sum of an infinitesimal Wigner matrix ${\rm d}\vb{W}(t)$ and an independent rank-one Poisson noise, more precisely:
\begin{equation}\label{poisson}
    {\rm d}\vb{B}(t) = {\rm d}\vb{W}(t) + \begin{cases}
        \eta_t \left| \vec U_t \right\rangle \left\langle \vec U_t \right| \quad & \text{prob.} \quad N {\rm d}t \\
        0 \quad & \text{prob.} \quad 1 - N {\rm d}t,
    \end{cases}   
\end{equation}
where ${\rm d}\vb{W}(t)$ is a (matrix) Wiener noise, $\eta_t$ are IID random variables with some distribution $\psi(\eta)$ and $\vec U_t$ are random vectors chosen uniformly on the unit sphere, $(\vec U_t)^2=1$. It is easy to show that the resulting second free cumulant of $\vb B(t)$ reads: $\kappa_2^{\vb B(t)}=(\sigma_W^2 + \mathbb{E}[\eta^2]) t$ (where $\sigma_W^2$ is the Wigner contribution), and higher cumulants are given by $\kappa_n^{\vb B(t)}=\mathbb{E}[\eta^n] t$. 

In Eq.~\eqref{eq:pert} we applied perturbation theory to $\vb{A}+t\vb{B}$ by considering $t\vb{B}$ as a small, independent perturbation of $\vb{A}$ as $t\to 0$, which allowed us to compute the initial derivatives of $\lambda_i(t)$ at $t=0$. In contrast, we now fix a time $t$, consider a small time step $\Delta t$, and write $\vb{M}(t + \Delta t) = \vb{M}(t) + \Delta \vb{B}(t)$, 
where $\Delta \vb{B}(t)$, defined as $\Delta \vb{B}(t) = \int_{t}^{t+\Delta t} \mathrm{d} \vb{B}(t)$, is treated as an independent matrix increment of $\vb{M}(t)$. We will then take the limit $\Delta t \to 0$ in the perturbation theory formula and show that contributions from all orders -- i.e. not only the first order -- contribute to the derivative of $\lambda_i(t)$ at time $t$. Accordingly, the general formula from SI-3 yields the eigenvalue shift for $t\to t+\Delta t$ as
\begin{align}\label{eq:eigen_perturbation}
  \Delta\lambda_i(t) &= \sum_{n=1}^\infty \sum_{1 \leq e_1, \dots, e_n \leq N} \frac{1}{2\pi i} \\ \nonumber &\oint dz \frac{z \, \bra{e_1} \Delta \vb{B} \ket{e_2}\bra{e_2} \Delta \vb{B} \ket{e_3} \dots \bra{e_n} \Delta \vb{B} \ket{e_1}}{(z - \lambda_{e_1})^2(z - \lambda_{e_2})\dots(z - \lambda_{e_n})}.
\end{align} We first assume $\sigma_W^2 = 0$, which leads to $\kappa_n^{\Delta \vb{B}} = \E[\eta^n]\Delta t$. It is shown in \cite{maillard2019high} (see also \cite{bernard2024structured}) that when the indices $e_i$ are not pairwise distinct, the expected value of the product of matrix elements in Eq.~\eqref{eq:eigen_perturbation} involves free cumulants of $\Delta \vb{B}$ (see also \cite{Foini2019ETH, Foini2019rotinv, Pappalardi2022, Bouverot2024}). Since free cumulants of $\Delta \vb{B}$ are all proportional to $\Delta t$, when considering the expectation and subsequently taking the limit $N \to \infty$, the only surviving terms linear in $\Delta t$ are those for which the indices $e_i$ are pairwise distinct. We thus obtain
\begin{align}\label{eq:expected_delta_lambda}
  \mathbb{E}[\Delta\lambda_i(t)] &\underset{\Delta t \to 0}{\sim} \Delta t \sum_{n=1}^\infty \mathbb{E}[\eta^n] \h_{n-1}^{(N)}(\lambda_i(t)),
\end{align}
which exactly recovers Eq.~\eqref{eq:GDBM}, as it should. 

Consider now the specific case of the largest eigenvalue, denoted $\lambda_{\mathrm{max}} = \lambda_N$. Letting $\rho_t(\lambda)$ be the limiting spectral density of eigenvalues of $\vb{M}(t)$ and assuming $\rho_t(\lambda_{\mathrm{max}}) = 0$, the condition of pairwise distinctness for indices $e_i$ becomes irrelevant in the definition of $\h_n$ (see Eq.~\eqref{eq:GDBM}). In this case one obtains $\h_n(\lambda) = \h^n(\lambda)$, where we recall that $\h$ is the standard Hilbert transform of $\rho$. Since $\rho_t=0$ at the edge of the spectrum, one also has $\g_t(\lambda_{\mathrm{max}})=\h_{t}(\lambda_{\mathrm{max}})$. Hence, Eq.~\eqref{eq:expected_delta_lambda} simplifies to
\begin{equation} \label{eq:BBP}
    {\rm d} \lambda_{\mathrm{max}} = \mathbb{E}\left[\frac{\eta_t}{1 - \eta_t \g_{t}(\lambda_{\mathrm{max}})}\right] {\rm dt}.
\end{equation}
This expression is valid only if $\eta_t \g_{t}(\lambda_{\mathrm{max}}(t)) < 1$, which coincides exactly with the condition for the absence of an outlier when adding a rank-1 perturbation $\eta_t \ket{U_t}\bra{U_t}$ to matrix $\vb{M}(t)$ (see \cite{baik2005phase,benaych2011eigenvalues, potters2020first}). Quite interestingly, the appearance of an outlier beyond $\lambda_{\mathrm{max}}$ is signalled by a divergence of the ``velocity'' in our generalized  Dyson framework.

Returning to the analysis of a general eigenvalue $\lambda_i(t)$, we can examine the fluctuations of $\Delta \lambda_i(t)$ around its expected value  $\mathbb{E}[\Delta\lambda_i(t)]$. Again, the reasoning above shows that only terms with pairwise distinct indices contribute significantly. Furthermore, Lemma \ref{lemma} in SI-5 shows that the variance of $\bra{e_1}\Delta \vb{B}\ket{e_2}\bra{e_2}\Delta \vb{B}\ket{e_3}\dots\bra{e_n}\Delta \vb{B}\ket{e_1}$ scales as $(\Delta t)^j$ for some $j\geq 2$ as $N\to\infty$. Thus, in the subsequent limit $\Delta t\to 0$, the random fluctuations of $\Delta\lambda_i$ arise solely from the leading-order term corresponding to $j=1$, which is simply $\Delta \vb{B}_{ii}:=\bra{i}\Delta\vb{B}\ket{i}$.

Using the results of \cite{guionnet2005fourier}, one can show that in the high-dimensional limit the classical cumulants $c_n$ of any diagonal element of a rotationally invariant matrix (such as $\Delta \vb{B}$) are related to the free cumulants $\kappa_n$ of the full matrix via
\begin{equation}
c_n^{\Delta \vb{B}_{ii}} \underset{N\to\infty}{\sim} \left(\frac{2}{N}\right)^{n-1}(n-1)!\,\kappa_n^{\Delta \vb{B}}.
\end{equation}

Hence to leading order in $N$, we obtain the following Generalized Dyson Brownian Motion (GDBM) describing the evolution of eigenvalues, as a set of interacting particles with a Brownian noise:
\begin{align}\label{eq:final_DBM} 
    \mathrm{d}\lambda_i(t) =& \sum_{n=1}^{\infty} \mathbb{E}[\eta^n]\, \h_{n-1}^{(N)}(\lambda_i(t))\,\mathrm{d}t \\ \nonumber
     &+ \sqrt{\frac{2 \sigma^2}{N}}\,\mathrm{d}W_i(t) + O(N^{-1}), \,\, i = 1,\dots,N
\end{align}
where $\sigma^2= \sigma^2_W + \mathbb{E}[\eta^2]$ and $\mathrm{d}W_i(t)$ denotes independent normalized Wiener noises.

Finally, let us mention that a dynamical system analogous to Eq.~\eqref{eq:GDBM} can be constructed to interpret the {\it product} of free matrices and the multiplicativity of ``S-transforms''. In this case, the ``velocity'' $F_i(\lambda_i,t)$ corresponding to the matrix $\sqrt{\vb{B}(t)}\,\vb{A}\sqrt{\vb{B}(t)}$ reads:
\begin{equation}
    F_i(\lambda_i,t) = \lambda_i \sum_{n=0}^{+\infty}  \, \frac{\mathrm{d}\zeta_{n}^{\mathbf{B}(t)}}{\mathrm{d}t}  \lim_{N\to \infty}\frac{1}{N^{n}}\sum_{\substack{e_1, \dots, e_{n} \neq i \\ p.d.}} \prod_{j=1}^{n} \frac{\lambda_{e_j}}{ (\lambda_i - \lambda_{e_j})},
\end{equation}
where the $\zeta_{n}^{\mathbf{B}(t)}$ are defined through the series expansion of the S-transform: $\log S^{\vb{B}(t)}(z) = -\sum_n \zeta_{n}^{\mathbf{B}(t)} z^n$. In the case where $\vb{B}(t)$ is a Wishart matrix of parameter $t$, such that $\vb{B}(0)= \mathbb{I}$, one has $S^{\vb{B}(t)}(z)=(1+tz)^{-1}$ (see e.g. \cite{potters2020first}, chapter 15) and hence 
\begin{equation}
\frac{\mathrm{d}\zeta_n^{\mathbf{B}(t)}}{\mathrm{d}t} =
\begin{cases} 
(-t)^{n-1} & \text{if } n \geq 1, \\
0 & \text{if } n = 0.
\end{cases}
\end{equation}

In conclusion, we have proposed a direct interpretation of the otherwise abstract additivity of R-transforms for the sum of free matrices, in terms of the evolution of ``particles'' (the eigenvalues) interacting through two- and higher-body forces. For infinitely divisible matrices, we obtain  a non-trivial generalisation of the classic Dyson Brownian motion with Coulomb interaction. The detailed study of such a dynamical system is certainly interesting on its own, in particular when such interactions lead to the ``expulsion'' of one eigenvalue outside of the bulk, see Eq. \eqref{eq:BBP} and \cite{baik2005phase, benaych2011eigenvalues, potters2020first}. In view of the huge present activity around the Dyson Brownian motion (see e.g. \cite{dandekar2024current} and refs. therein), such a generalization would certainly be fruitful. We hope to report on this issue in the near future. 

\section*{Acknowledgements} We thank Marc Potters and Alice Guionnet for many inspiring discussions on these topics. We also thank the referee for pushing us to clarify and extend our initial results. This research was conducted within the Econophysics
\& Complex Systems Research Chair, under the aegis of
the Fondation du Risque, the Fondation de l’Ecole polytechnique, the Ecole polytechnique and Capital Fund
Management.

\clearpage
\onecolumngrid

\section{Supplementary Information} 

\subsection{SI-1: Generalization of the Burgers equation}\label{proofburgers}

Let us verify that Eq.~\eqref{burgers} follows directly from the subordination relation:
\begin{equation}\label{subordination}
    g = \g_{\vb{A}}(z - R_{\vb{B}}(g, t)),
\end{equation}
where $ \g_{\vb{A}} $ is the Stieltjes transform of $ \vb{A} $, and $g$ is a shorthand for the notation $ \g_{\vb{M}}(z, t) $. By differentiating Eq.~\eqref{subordination} with respect to $ t $ and $ z $ respectively, we obtain:

\begin{equation}\label{first}
    \partial_t g = - \left[\partial_t R_{\vb{B}} + \partial_t g \partial_g R_{\vb{B}}\right]\partial_z \g_{\vb{A}},
\end{equation}
and

\begin{equation}
    \partial_z g = (1 - \partial_z g \partial_g R_{\vb{B}}) \partial_z \g_{\vb{A}}.
\end{equation}
Thus, by multiplying Eq.~\eqref{first} by $ (1 - \partial_z g \partial_g R_{\vb{B}}) $, we have:

\begin{equation}
    (1 - \partial_z g \partial_g R_{\vb{B}}) \partial_t g = - \left[\partial_t R_{\vb{B}} + \partial_t g \partial_g R_{\vb{B}}\right] \partial_z g,
\end{equation}
which is:

\begin{equation}
    \partial_t g = - \partial_t R_{\vb{B}}(g, t) \partial_z g.
\end{equation}
This concludes the proof of Eq.~\eqref{burgers}. In fact, Eq.~\eqref{subordination} is simply the solution of the previous equation using the so-called ``methods of characteristics''.

\subsection{SI-2: Proof of equation \eqref{eq1}}\label{s2}
The key argument leading to Eq.~\eqref{eq1} is the following lemma.

\begin{lemma}\label{lemma1}
    Let $\lambda_1, \ldots, \lambda_N$ be distinct real numbers, $n\in \mathbb{N}$ and $z\in \mathbb{C}$ non-real. Then, 

    \begin{equation}
    \sum_{\underset{\text{p.d.}}{1\leq e_0, e_1, \ldots, e_n\leq N}} \frac{1}{(z- \lambda_{e_0})^2} \left[\frac{1}{\prod_{k=1}^n (\lambda_{e_0} - \lambda_{e_k})} - \frac{1}{\prod_{k=1}^n (z - \lambda_{e_k})}\right] = 0.
    \end{equation}
\end{lemma}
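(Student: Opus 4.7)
My approach begins by exploiting the symmetry of the summand. The indices $(e_1,\dots,e_n)$ enter only through $\prod_{k=1}^n(\lambda_{e_0}-\lambda_{e_k})$ and $\prod_{k=1}^n(z-\lambda_{e_k})$, both of which are symmetric in these indices, so the summand depends on them only through the unordered set $\{e_1,\dots,e_n\}$. Consequently the sum over pairwise distinct ordered $(n+1)$-tuples equals $n!$ times a sum indexed by an unordered $(n+1)$-subset $\{i_0,\dots,i_n\}\subseteq\{1,\dots,N\}$ together with a choice of distinguished element $e_0=i_j$. The lemma thereby reduces to the per-subset identity
\[
\sum_{j=0}^{n}\frac{1}{(z-\lambda_{i_j})^{2}\prod_{k\neq j}(\lambda_{i_j}-\lambda_{i_k})}
\;=\;
\sum_{j=0}^{n}\frac{1}{(z-\lambda_{i_j})^{2}\prod_{k\neq j}(z-\lambda_{i_k})},
\]
to be verified for every fixed distinct $\lambda_{i_0},\dots,\lambda_{i_n}$.

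To prove this identity I would introduce $P(z):=\prod_{k=0}^{n}(z-\lambda_{i_k})$ and show both sides equal $P'(z)/P(z)^2$. For the right-hand side, I would use the elementary relation $\prod_{k\neq j}(z-\lambda_{i_k})=P(z)/(z-\lambda_{i_j})$ to collapse the sum to $P(z)^{-1}\sum_{j=0}^{n}(z-\lambda_{i_j})^{-1}$, and then recognize the second factor as the logarithmic derivative $P'(z)/P(z)$. For the left-hand side, I would invoke the standard partial-fraction expansion
\[
\frac{1}{P(z)}=\sum_{j=0}^{n}\frac{1}{(z-\lambda_{i_j})\prod_{k\neq j}(\lambda_{i_j}-\lambda_{i_k})},
\]
differentiate both sides in $z$, and observe that the left becomes $-P'(z)/P(z)^2$ while the right is the negative of the desired sum. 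Both sides therefore equal $P'(z)/P(z)^2$, which closes the argument.

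The only non-mechanical step is recognizing $P'/P^2$ as a common expression on both sides: it arises on the $(z-\lambda_{e_k})$ side from a logarithmic derivative and on the $(\lambda_{e_0}-\lambda_{e_k})$ side from differentiating a residue expansion. Once this dual interpretation is in place, the identity holds at the level of rational functions in $z$, so the hypothesis that $z$ is non-real is used only to ensure all denominators are nonzero, and no limits or estimates enter. I therefore do not anticipate any serious obstacle; the whole proof is essentially the observation that the two partial-fraction pictures of $d(1/P)/dz$ coincide.
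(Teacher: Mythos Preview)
Your proof is correct and takes a genuinely different route from the paper's. The paper introduces the auxiliary meromorphic function
\[
f(\xi)=\sum_{\substack{e_0,\dots,e_n\\\text{p.d.}}}\frac{1}{(z-\xi)^2\prod_{k=0}^n(\xi-\lambda_{e_k})}
\]
and applies the residue theorem: the decay $f(\xi)=O(\xi^{-(n+3)})$ forces the residues at all $\lambda_m$ and at the double pole $z$ to sum to zero, and computing these residues (using the symmetry among the $e_k$) yields the identity directly for the full sum. Your argument instead first reduces to a per-subset identity and then recognizes both sides as $P'(z)/P(z)^2$, obtained on one side by differentiating the partial-fraction expansion of $1/P$ and on the other from the logarithmic derivative of $P$. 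Your approach is more elementary (no contour integration is invoked) and actually proves a sharper statement, namely that the cancellation already occurs within each fixed $(n+1)$-subset $\{i_0,\dots,i_n\}$; the paper's residue computation only delivers the aggregate identity over all tuples. Conversely, the contour-integral argument adapts more mechanically to variants of the identity (different powers or extra factors), since one simply recomputes residues rather than searching for a closed form common to both sides.
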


\begin{proof}
 Let's introduce the function

\begin{equation}
    f(\xi) = \sum_{e_0, e_1, \ldots, e_n} \frac{1}{(z- \xi)^2} \frac{1}{\prod_{k=0}^n (\xi - \lambda_{e_k})}.
\end{equation}
Since

\begin{equation}
    f(\xi) \underset{\xi\to +\infty}{\sim} \frac{1}{\xi^2}, 
\end{equation}
we know that the integral of $f$ over a contour of radius tending towards infinity should be zero. Hence, the sum of the residues of this function must be zero. Fix $1\leq m\leq N$. $f$ has a residue at $\lambda_m$ which is simply given by 

\begin{equation}
    (n+1)\sum_{e_1, \ldots, e_n\neq m} \frac{1}{(z - \lambda_m)^2} \frac{1}{\prod_{k=1}^n (\lambda_m - \lambda_{e_k})}
\end{equation}
where we make use of the symmetries of the $e_i$.
There is also a pole at $z$. Let's denote 

\begin{equation}
    P(\xi) = \prod_{k=1}^n (\xi - \lambda_{e_k}).
\end{equation}
The residue of $f$ at $z$ is simply given by 

\begin{equation}
    \sum_{e_0, e_1, \ldots, e_n} \frac{d}{d\xi}\big|_{z} \frac{1}{P(\xi)} = - \sum_{e_0, e_1, \ldots, e_n} \frac{P'(z)}{P^2(z)} = -\sum_{e_0, e_1, \ldots, e_n} \sum_{k=0}^n \frac{1}{z - \lambda_{e_k}} \frac{1}{\prod_{k=0}^n (z - \lambda_{e_k})}
\end{equation}
which gives by symmetry of $e_k$: 

\begin{equation}
    -(n+1) \sum_{e_0, \ldots, e_n} \frac{1}{(z - \lambda_{e_0})^2} \frac{1}{\prod_{k=1}^n (z - \lambda_{e_k})}.
\end{equation}
The residue theorem then states that

\begin{equation}
    \sum_m (n+1)\sum_{e_1, \ldots, e_n\neq m} \frac{1}{(z - \lambda_m)^2} \frac{1}{\prod_{k=1}^n (\lambda_m - \lambda_{e_k})} - (n+1) \sum_{e_0, \ldots, e_n} \frac{1}{(z - \lambda_{e_0})^2} \frac{1}{\prod_{k=1}^n (z - \lambda_{e_k})} = 0
\end{equation}
which proves the desired formula.

\end{proof}

 \subsection{SI-3: Perturbation Theory}

We consider a classical perturbative setting in which a deterministic matrix $ \vb{A} $ is perturbed by a matrix $ \epsilon \vb{B} $, where $ \epsilon \in \mathbb{R} $ is a small parameter. We are interested in the asymptotic expansion of the eigenvalues $\lambda_1^{\vb{M}}<...<\lambda_N^{\vb{M}}$ of the perturbed matrix $\vb{M} := \vb{A} + \epsilon \vb{B}$ in the limit $ \epsilon \to 0$. More precisely, we seek to express the eigenvalues of $ \vb{M} $ as a power series in $ \epsilon $. For each $ i $, we write:
\begin{equation}\label{defmu}
    \lambda_i^{\vb{M}} \underset{\epsilon \to 0}{=} \lambda_i^{\vb{A}} + \epsilon \mu_i^{(1)} + \epsilon^2 \mu_i^{(2)} + \cdots + \epsilon^k \mu_i^{(k)} + o(\epsilon^k),
\end{equation}
where $ \lambda_i^{\vb{A}} $ denotes the $ i $-th eigenvalue of $ \vb{A} $, and the coefficients $ \mu_i^{(j)} $ correspond to the $ j $-th order correction terms.

To compute these coefficients, we use the theory of resolvents. Recall that if $\vb{E}$ denotes a matrix, the resolvent of $\vb{E}$, denoted $\vb{G}_{\vb{E}}(z)$, is defined as the matrix $(z - \vb{E})^{-1}$, where $z$ is a complex number outside the spectrum of $\vb{E}$. Let us fix $\lambda \in \mathbb{C}$. The Cauchy formula yields
\begin{equation}
    \lambda = \frac{1}{2i\pi} \oint_\gamma \frac{z}{z - \lambda} \, dz,
\end{equation}
where $\gamma$ is a closed loop around $\lambda$. Consider $\vb{M} = \vb{A} + \epsilon\vb{B}$. The formula above implies that
\begin{equation}\label{int}
    \lambda = \frac{1}{2i\pi} \oint_\gamma z \tr(\vb{G}_{\vb{M}}(z)) \, dz,
\end{equation}
where $\lambda$ is an eigenvalue of $\vb{M}$, and $\gamma$ is a closed loop containing only $\lambda$ as an eigenvalue of $\vb{M}$. Now, let us expand $\vb{G}_{\vb{M}}(z)$ for small $\epsilon$, which gives:
\begin{equation}\label{expansion}
    \vb{G}_{\vb{M}}(z) = \vb{G}_{\vb{A}}(z) + \epsilon \vb{G}_{\vb{A}}(z) \vb{B} \vb{G}_{\vb{A}}(z) + \epsilon^2 \vb{G}_{\vb{A}}(z) \vb{B} \vb{G}_{\vb{A}}(z) \vb{B} \vb{G}_{\vb{A}}(z) + \dots.
\end{equation}
By combining Eqs. \eqref{int} and the trace of Eq.~\eqref{expansion}, we deduce that
\begin{equation}
    \mu_i^{(j)} = \sum_{1 \leq e_1, \dots, e_j \leq N} \frac{1}{2\pi i} \oint z \frac{\bra{e_1} \vb{B} \ket{e_2} \bra{e_2} \vb{B} \ket{e_3} \dots \bra{e_j} \vb{B} \ket{e_1}}{(z - \lambda_{e_1}^{\vb{A}})^2 (z - \lambda_{e_2}^{\vb{A}}) \dots (z - \lambda_{e_j}^{\vb{A}})} \, dz.
\end{equation}where the $\mu_i^{(j)}$ were defined in Eq.~\eqref{defmu}, and where the $\ket{e}$ denote the unit eigenvectors of $\vb{A}$ associated with their respective eigenvalues $\lambda_e^{\vb{A}}$. The residue theorem allows us to assert that the $\mu_i^{(j+1)}$ only include sums of fractional terms where the numerator is the product of $j$ matrix elements of $\vb{B}$, and the denominator is of the form $(\lambda_i^{\vb{A}} - \lambda_{k_1}^{\vb{A}}) \dots (\lambda_i^{\vb{A}} - \lambda_{k_j}^{\vb{A}})$, where $k_1, \dots, k_j$ are distinct from $i$. The contribution from the case where the $\lambda_{k_a}^{\vb{A}}$ are distinct is precisely of the form given by Eq.~\eqref{firsteq}.

\subsection{SI-4: Correspondence between perturbation theory and the dynamical system \eqref{eqmain}}\label{3s}

We recall the functions:
\begin{equation}\label{df2}
    \lambda_i(t) = \lambda_i^{\vb{A}} + t \mu_i^{(1)} + t^2 \mu_i^{(2)} + t^{3} \mu_i^{(3)} + t^4 \mu_i^{(4)} + o(t^4),
\end{equation}
where $\mu_i^{(1)}, \mu_i^{(2)}, \mu_i^{(3)}$ are defined in equations \eqref{per1} and \eqref{per2}, and the fourth term of the perturbation theory $\mu_i^{(4)}$ is given by:

\begin{align}\label{mu4}
\mu_i^{(4)} = & \sum_{k_1, k_2, k_3 \neq i} \frac{\bra{i} \vb{B} \ket{k_3} \bra{k_3} \vb{B} \ket{k_2} \bra{k_2} \vb{B} \ket{k_1} \bra{k_1} \vb{B} \ket{i}}{(\lambda_i - \lambda_{k_1})(\lambda_i - \lambda_{k_2})(\lambda_i - \lambda_{k_3})} \nonumber \\
& - \sum_{k_1, k_2 \neq i} \frac{\bra{i} \vb{B} \ket{k_1}^2 \bra{i} \vb{B} \ket{k_2}^2}{(\lambda_i - \lambda_{k_1})(\lambda_i - \lambda_{k_2})^2} \nonumber \\
& - \bra{i} \vb{B} \ket{i} \sum_{k_1, k_2 \neq i} \frac{\bra{i} \vb{B} \ket{k_1} \bra{k_1} \vb{B} \ket{k_2} \bra{k_2} \vb{B} \ket{i}}{(\lambda_i - \lambda_{k_1})^2 (\lambda_i - \lambda_{k_2})} \nonumber \\
& - \bra{i} \vb{B} \ket{i} \sum_{k_1, k_2 \neq i} \frac{\bra{i} \vb{B} \ket{k_1} \bra{k_1} \vb{B} \ket{k_2} \bra{k_2} \vb{B} \ket{i}}{(\lambda_i - \lambda_{k_1})(\lambda_i - \lambda_{k_2})^2} \nonumber \\
& + \bra{i} \vb{B} \ket{i}^2 \sum_{k \neq i} \frac{\bra{i} \vb{B} \ket{k}^2}{(\lambda_i - \lambda_k)^3}.
\end{align}
Here, we aim to verify that the $\lambda_i(t)$ are indeed the solutions of the dynamical system described by Eq.~\eqref{eqmain}, up to order $t^4$. For small $t$, the system \eqref{eqmain} can be approximated by

\begin{align}\label{firstapprox}
    \frac{\mathrm{d} \lambda_i(t)}{dt} &\underset{t\to 0}{=} \frac{2\kappa_2(\vb{B})t}{N} \sum_{k \neq i} \frac{1}{\lambda_i(t) - \lambda_k(t)} \nonumber \\
    &+ \frac{3\kappa_3(\vb{B}) t^{2}}{N^2} \sum_{\substack{k_1, k_2 \neq i \\ \text{p.d.}}}
    \frac{1}{(\lambda_i(t) - \lambda_{k_1}(t))(\lambda_i(t) - \lambda_{k_2}(t))} \nonumber \\
    &+ \frac{4\kappa_4(\vb{B}) t^3}{N^3} \sum_{\substack{k_1, k_2, k_3 \neq i \\ \text{p.d.}}}
    \frac{1}{(\lambda_i(t) - \lambda_{k_1}(t))(\lambda_i(t) - \lambda_{k_2}(t))(\lambda_i(t) - \lambda_{k_3}(t))} + o(t^3).
\end{align}
or alternatively, keeping all terms to order $t^3$
\begin{align}\label{approx}
    \frac{\mathrm{d} \lambda_i(t)}{dt} &\underset{t\to 0}{=} \frac{2\kappa_2(\vb{B})t}{N} \sum_{k \neq i} \frac{1}{\lambda_i^{\vb{A}} - \lambda_k^{\vb{A}}} 
    + \frac{\kappa_2(\vb{B})t^3}{N} \sum_{1 \leq k \neq i \leq N} \frac{\lambda_k''(0) - \lambda_i''(0)}{(\lambda_i^{\vb{A}} - \lambda_k^{\vb{A}})^2} \nonumber \\
    &+ \frac{3\kappa_3(\vb{B}) t^2}{N^2} \sum_{\substack{k_1, k_2 \neq i \\ \text{p.d.}}}
    \frac{1}{(\lambda_i^{\vb{A}} - \lambda_{k_1}^{\vb{A}})(\lambda_i^{\vb{A}} - \lambda_{k_2}^{\vb{A}})} \nonumber \\
    &+ \frac{4\kappa_4(\vb{B})t^3}{N^3} \sum_{\substack{k_1, k_2, k_3 \neq i \\ \text{p.d.}}}
    \frac{1}{(\lambda_i^{\vb{A}} - \lambda_{k_1}^{\vb{A}})(\lambda_i^{\vb{A}} - \lambda_{k_2}^{\vb{A}})(\lambda_i^{\vb{A}} - \lambda_{k_3}^{\vb{A}})} + o(t^3).
\end{align}
where $\lambda''(0)$ is a short-hand for $\mathrm{d}^2\lambda/\mathrm{d}t^2|_{t=0}$. Now,
\begin{align}
    \frac{\kappa_2(\vb{B})}{N}\sum_{k \neq i} \frac{\lambda_k''(0) - \lambda_i''(0)}{(\lambda_i^{\vb{A}} - \lambda_k^{\vb{A}})^2} &= \frac{2\kappa_2(\vb{B})^2}{N^2}\sum_{1 \leq k \neq i \leq N} \frac{1}{(\lambda_i^{\vb{A}} - \lambda_k^{\vb{A}})^2} \left(\sum_{j\neq k} \frac{1}{\lambda_k^{\vb{A}} - \lambda_j^{\vb{A}}} - \sum_{j\neq i} \frac{1}{\lambda_i^{\vb{A}} - \lambda_j^{\vb{A}}}\right).
\end{align}
Since
\begin{equation}
    \sum_{j\neq k} \frac{1}{\lambda_k^{\vb{A}} - \lambda_j^{\vb{A}}} - \sum_{j\neq i} \frac{1}{\lambda_i^{\vb{A}} - \lambda_j^{\vb{A}}} = \sum_{j\neq k, i} \frac{\lambda_i^{\vb{A}} - \lambda_k^{\vb{A}}}{(\lambda_k^{\vb{A}} - \lambda_j^{\vb{A}})(\lambda_i^{\vb{A}} - \lambda_j^{\vb{A}})} + \frac{2}{\lambda_k^{\vb{A}} - \lambda_i^{\vb{A}}},
\end{equation}
then
\begin{equation}
    \frac{\kappa_2(\vb{B})}{N}\sum_{k \neq i} \frac{\lambda_k'(0) - \lambda_i'(0)}{(\lambda_i^{\vb{A}} - \lambda_k^{\vb{A}})^2} = \\
    \frac{2\kappa_2(\vb{B})^2}{N^2} \left[ \sum_{\substack{k, j \neq i \\ k \neq j}}
 \frac{1}{(\lambda_k^{\vb{A}} - \lambda_j^{\vb{A}})(\lambda_i^{\vb{A}} - \lambda_k^{\vb{A}})(\lambda_i^{\vb{A}} - \lambda_j^{\vb{A}})} + \sum_{k\neq i}\frac{2}{(\lambda_k^{\vb{A}} - \lambda_i^{\vb{A}})^3}\right].
\end{equation}
The first sum on the right-hand side vanishes by using symmetry between the indices $k$ and $j$. Finally, by returning to Eq.~\eqref{approx}, we obtain
\begin{align}\label{finalapprox}
    \frac{\mathrm{d} \lambda_i(t)}{dt} &\underset{t\to 0}{=} \frac{2\kappa_2(\vb{B})t}{N} \sum_{k \neq i} \frac{1}{\lambda_i^{\vb{A}} - \lambda_k^{\vb{A}}} \nonumber \\
    &+ \frac{3\kappa_3(\vb{B}) t^2}{N^2} \sum_{\substack{k_1, k_2 \neq i \\ \text{p.d.}}} \frac{1}{(\lambda_i^{\vb{A}} - \lambda_{k_1}^{\vb{A}})(\lambda_i^{\vb{A}} - \lambda_{k_2}^{\vb{A}})} \nonumber \\
    &+ 4t^3\Bigg(\frac{\kappa_4(\vb{B})}{N^3} \sum_{\substack{k_1, k_2, k_3 \neq i \\ \text{p.d.}}} \frac{1}{(\lambda_i^{\vb{A}} - \lambda_{k_1}^{\vb{A}})(\lambda_i^{\vb{A}} - \lambda_{k_2}^{\vb{A}})(\lambda_i^{\vb{A}} - \lambda_{k_3}^{\vb{A}})} \nonumber \\
    &\quad - \frac{\kappa_2(\vb{B})^2}{N^2} \sum_{k \neq i} \frac{1}{(\lambda_i^{\vb{A}} - \lambda_k^{\vb{A}})^3} \Bigg) + o(t^3).
\end{align}
In the $N \to \infty$ limit, we expect the functions defined in Eq.~\eqref{df2} to be approximated by
\begin{equation}\label{def}
    \lambda_i(t) = \lambda_i^{\mathbf{A}} + t \mathbb{E}[\mu_i^{(1)}] + t^2 \mathbb{E}[\mu_i^{(2)}] + t^{3} \mathbb{E}[\mu_i^{(3)}] + t^4 \mathbb{E}[\mu_i^{(4)}],
\end{equation}
where $\mathbb{E}[\mu_i^{(1)}] = 0$, and $\mathbb{E}[\mu_i^{(2)}]$, $\mathbb{E}[\mu_i^{(3)}]$ are given in Eq.~\eqref{expected}. Additionally, the formula ~\eqref{cyclicformula} yields
\begin{equation}
\mathbb{E}\left[\mu_i^{(4)}\right] = \frac{\kappa_4(\mathbf{B})}{N^3} \sum_{\substack{k_1, k_2, k_3 \neq i \\ \text{p.d.}}}
 \frac{1}{(\lambda_i^{\vb{A}} - \lambda_{k_1}^{\vb{A}})(\lambda_i^{\vb{A}} - \lambda_{k_2}^{\vb{A}})(\lambda_i^{\vb{A}} - \lambda_{k_3}^{\vb{A}})} \\
- \frac{\kappa_2(\mathbf{B})^2}{N^2} \sum_{k \neq i} \frac{1}{(\lambda_i^{\vb{A}} - \lambda_k^{\vb{A}})^3}.
\end{equation}
We then verify that the functions $\lambda_i$ in Eq.~\eqref{def} are solutions of the approximate system above as well as the system in Eq.~\eqref{eqmain}, up to order $t^4$.
\subsection{SI-5: A lemma}

\begin{lemma}\label{lemma}
Let $(\eta_t)$ be IID random variables drawn from some distribution $\psi(\eta)$, and $(\vec{U}_t)$ be random vectors chosen uniformly on the unit sphere of $\mathbb{R}^N$, with $\|\vec{U}_t\|^2=1$. Define $T := \lfloor N \Delta t \rfloor$, and set:

\begin{equation}
    \Delta \mathbf{B} = \sum_{t=1}^T \eta_t \left|\vec{U}_t\right\rangle \left\langle \vec{U}_t\right|.
\end{equation}
Let $j \geq 2$ and consider distinct, orthonormal vectors $b_1, \dots, b_j \in \mathbb{R}^N$. Then, there exists a constant $c_j > 0$ such that:

\begin{equation}
    \mathbb{V}\left[\langle b_1 | \vb{B} | b_2 \rangle \langle b_2 | \Delta\vb{B}| b_3 \rangle \dots \langle b_{j-1} | \Delta\vb{B} | b_j \rangle\right] \underset{N \to +\infty}{\sim} \frac{c_j (\Delta t)^j}{N^j}.
\end{equation}
\end{lemma}

\begin{proof}
Using formula \eqref{cyclicformula} or expanding the terms of $\Delta \mathbf{B}$ explicitly, one shows that:

\begin{equation}\label{justeexp}
    \mathbb{E}\left[\langle b_1 |\Delta \vb{B}| b_2 \rangle \langle b_2 |\Delta \vb{B}| b_3 \rangle \dots \langle b_{j-1} |\Delta \vb{B}| b_j \rangle\right] \underset{N \to +\infty}{\sim} \frac{\mathbb{E}[\eta^j] \Delta t}{N^{j-1}}.
\end{equation}
One can expand:

\begin{equation}\label{developpement}
\left[\langle b_1 |\Delta \vb{B}| b_2 \rangle \dots \langle b_{j-1} |\Delta \vb{B}| b_j \rangle\right]^2 = \sum_{1\leq t_1,t_1',\dots,t_j,t_j'\leq T} \prod_{k=1}^j \left[\eta_{t_k}\eta_{t_k'}[\vec{U}_{t_{k-1}}]_{b_k}[\vec{U}_{t_k}]_{b_k}[\vec{U}_{t_{k-1}'}]_{b_k}[\vec{U}_{t_k'}]_{b_k}\right],
\end{equation}with the convention $t_0 = t_j$, $t_0' = t_j'$, and $[\vec{U}_t]_{b_k} = \langle \vec{U}_t | b_k \rangle$. To evaluate the expectation, recall that a random unit vector $\vec{U}_t$ uniformly distributed on the unit sphere can be generated as $\vec{U}_t = X_t / |X_t|$, where $X_t$ is a standard Gaussian vector in $\mathbb{R}^N$. In this representation, the coordinates $[\vec{U}_t]_{b_k}$ are ratios of Gaussian components to the Euclidean norm of the full vector. Using this representation, the dominant contribution to the expectation of \eqref{developpement} as $N \to +\infty$ thus scales as $
c_j T^j/N^{2j}$ where $c_j$ is a fixed strictly positive constant. Hence, combined with \eqref{justeexp}, we find:

\begin{equation}
\mathbb{V}\left[\langle b_1 |\Delta \vb{B}| b_2 \rangle \dots \langle b_{j-1} |\Delta \vb{B}| b_j \rangle\right] \underset{N\to \infty}{\sim} \frac{c_j (\Delta t)^j}{N^j} - \frac{\mathbb{E}[\eta^j]^2 (\Delta t)^2}{N^{2(j-1)}} \underset{N\to \infty}{\sim} \frac{c_j (\Delta t)^j}{N^j}
\end{equation}for $j \geq 3$.

For the special case $j = 2$, we start from

\begin{equation}
\left[\langle b_1|\Delta \mathbf{B}|b_2\rangle\langle b_2|\Delta \vb{B}|b_1\rangle\right]^2 = \sum_{1\leq t_1,t_1',t_2,t_2'\leq T} \eta_{t_1}\eta_{t_2}\eta_{t_1'}\eta_{t_2'}[\vec{U}_{t_1}]_{b_1}[\vec{U}_{t_2}]_{b_1}[\vec{U}_{t_1'}]_{b_1}[\vec{U}_{t_2'}]_{b_1}[\vec{U}_{t_1}]_{b_2}[\vec{U}_{t_2}]_{b_2}[\vec{U}_{t_1'}]_{b_2}[\vec{U}_{t_2'}]_{b_2}.
\end{equation}
Taking the expectation as $N \to +\infty$, the leading term is

\begin{equation}
\frac{3\mathbb{E}[\eta^2]^2T^2}{N^4} = \frac{3\mathbb{E}[\eta^2]^2(\Delta t)^2}{N^2}.
\end{equation}
Since
\begin{equation}
\mathbb{E}\left[\langle b_1|\Delta \vb{B}|b_2\rangle\langle b_2|\Delta \vb{B}|b_1\rangle\right] = \frac{\mathbb{E}[\eta^2]\Delta t}{N},
\end{equation}we can write
\begin{equation}
\mathbb{V}\left[\langle b_1|\Delta \vb{B}|b_2\rangle\langle b_2|\mathbf{B}|b_1\rangle\right] \underset{N \to +\infty}{\sim} \frac{2\mathbb{E}[\eta^2]^2(\Delta t)^2}{N^2},
\end{equation}
which concludes the proof.
\end{proof}

\newpage
\bibliographystyle{apsrev4-2}
\bibliography{References.bib}

\end{document}